\newtheorem{lemma}{Lemma}
\newtheorem{corollary}{Corollary}
\newcommand{\dfvs}{\textsf{DFVS}\xspace}
\newcommand{\tfvs}{\textsf{TFVS}\xspace}
\newcommand{\true}{\textbf{True}\xspace} 
\newcommand{\false}{\textbf{False}\xspace}
\newcommand{\pInT}{\ensuremath{p\_is\_in\_a\_triangle}}
\newcommand{\AAA}{\ensuremath{\mathcal{A}}\xspace}
\newcommand{\defproblem}[3]{
	\vspace{0.5em}
	\noindent\fbox{
		\begin{minipage}{0.97\textwidth}
			\begin{tabular*}{\textwidth}{@{\extracolsep{\fill}}lr} #1 \\ \end{tabular*}
			{\bf{Input:}} #2  \\
			{\bf{Output:}} #3
		\end{minipage}
	}\vspace{0.5em}}
\begin{document}

\title{A $2$-Approximation Algorithm for Feedback Vertex Set in Tournaments}

\author{
Daniel Lokshtanov\thanks{University of California, Santa Barbara, USA. \texttt{daniello@ucsb.edu}}
\and Pranabendu Misra\thanks{University of Bergen, Bergen, Norway. \texttt{pranabendu.misra@uib.no}}
\and Joydeep Mukherjee\thanks{Indian Statistical Institute, Kolkata, India. \texttt{joydeep.m1981@gmail.com}}
\and Fahad Panolan\thanks{University of Bergen, Bergen, Norway. \texttt{fahad.panolan@uib.no}}
\and Geevarghese Philip\thanks{Chennai Mathematical Institute, India. \texttt{gphilip@cmi.ac.in}}
\and Saket Saurabh\thanks{The Institute of Mathematical Sciences, HBNI, Chennai, India. \texttt{saket@imsc.res.in}}
}

\date{}

\maketitle

\begin{abstract}
A {\em tournament} is a directed graph $T$ such that every pair of
vertices is connected by an arc. A {\em feedback vertex set} is a
set $S$ of vertices in $T$ such that $T - S$ is acyclic. We
consider the {\sc Feedback Vertex Set} problem in
tournaments. Here the input is a tournament $T$ and a weight
function $w : V(T) \rightarrow \mathbb{N}$ and the task is to find
a feedback vertex set $S$ in $T$ minimizing $w(S) = \sum_{v \in S}
w(v)$. We give the first polynomial time factor $2$ approximation
algorithm for this problem. Assuming the Unique Games conjecture,
this is the best possible approximation ratio achievable in
polynomial time. \end{abstract}
\section{Introduction}
A {\em feedback vertex set} (FVS) in a graph $G$ is a vertex subset $S$ such that $G - S$ is acyclic. 
In the case of directed graphs, it means $G - S$ is a directed acyclic graph (DAG). 
In the {\sc (Directed) Feedback Vertex Set} (\textsf{(D)FVS})
problem we are given as input a (directed) graph $G$ and a weight
function $w : V(G) \rightarrow \mathbb{N}$.  The objective is to
find a minimum weight feedback vertex set $S$.
Both the directed and undirected version of the problem are NP-complete~\cite{GJ79} and have been extensively studied from the perspective of approximation algorithms~\cite{BafnaBF99,EvenNSS98}, parameterized algorithms~\cite{ChenLLOR08,CyganNPPRW11,KociumakaP14}, exact exponential time algorithms~\cite{Razgon07,XiaoN15} as well as graph theory~\cite{erdHos1965independent,reed1996packing}.

In this paper we consider a restriction of \dfvs, namely the {\sc Feedback Vertex Set in Tournaments} (\tfvs) problem, from the perspective of approximation algorithms. A {\em tournament} is a directed graph $G$ such that every pair of vertices is connected by an arc, and \tfvs\ is simply \dfvs when the input graph is required to be a tournament. 
We refer to the textbook of Williamson and Shmoys~\cite{willshmoys_book} for an introduction to approximation algorithms. 
Even this restricted variant \dfvs has applications in voting systems and rank aggregation
and is quite well-studied~\cite{CaiDZ00,Dom201076,GaspersM13,RamanS06,mnich20167,KumarL16}.
It is formally defined as follows.

\defproblem{\sc Feedback Vertex Set in Tournaments (TFVS)}{A tournament $G$ and a weight function $w:V(G) \rightarrow {\mathbb N}$.}{A minimum weight FVS of $G$.}

The problem has several simple $3$-approximation algorithms. It is
well known that a tournament has a directed triangle if and only
if there is a directed triangle~\cite{Dom201076}. Then a
$3$-approximation solution for the \emph{unweighted}
version\footnote{Where all the vertices have the same weight.} of
\tfvs\ is easily constructed as follows. If there is a directed
triangle in the tournament put all the vertices of the triangle in
the solution and delete them from the tournament. We repeat the
above process until the tournament becomes triangle
free\footnote{This will not, in general, give a $3$-approximation
  for a \emph{weighted} instance.}.  Another simple
$3$-approximation algorithm for \tfvs\ is given
in~\cite{bar2005}. The first algorithm with a better approximation
ratio was given by Cai et al.~\cite{CaiDZ00}, who gave a
$5/2$-approximation algorithm using the local ratio technique of
Bar-Yehuda and Even~\cite{BARYEHUDA}.  Recently, Mnich et
al.~\cite{mnich20167} gave a $7/3$-approximation algorithm using
the iterative rounding technique. They observe that the
approximation-preserving reduction from {\sc Vertex Cover} to
\tfvs of Speckenmeyer~\cite{speck_feedback} implies that, assuming
the Unique Games Conjecture (UGC)~\cite{khot2008vertex}, \tfvs
cannot have an approximation algorithm with factor smaller than
$2$.  The more general \dfvs problem has a
factor-$O(\min\{\log n \log \log n, \log \tau \log \log \tau \})$
approximation ~\cite{Seymour95,even2000divide} where \(n\) is the
number of vertices in the input tournament and $\tau$ is the cost
of an optimal solution, and it is known that \dfvs cannot have a
factor-$\alpha$ approximation for any constant $\alpha > 1$ under
the
UGC~\cite{guruswami2016simple,guruswami2011beating,svensson2012hardness}.
A related problem is {\sc $3$-Hitting Set} or {\sc Vertex Cover}
in $3$-uniform hypergraphs.  Here the input is a universe $U$ and
a family $\mathbb F$ of subsets of $U$ of size at most $3$.  The
goal is to find a minimum subset $S$ of the universe that
intersects every set in $\mathbb F$.  Observe that \tfvs is a
special case of this problem, since \tfvs reduces to hitting all
the directed triangles in the tournament.  While it is NP-hard to
approximate {\sc $3$-Hitting Set} better than factor
$2$~\cite{dinur2005new}, under the UGC there can be no polynomial
time approximation better than factor
$3$~\cite{khot2008vertex}\footnote{These results actually hold for
  the more general problem of {\sc Vertex Cover} in $k$-uniform
  hypergraphs.}.
Mnich et al.~\cite{mnich20167} state that their algorithm ``{\em gives hope that a $2$-approximation algorithm, that would be optimal
under the UGC, might be achievable} (for \tfvs)''. 
In this paper we show that this is indeed the case, by giving a (randomized) $2$-approximation algorithm for \tfvs. More formally, we prove the following theorem.

\begin{restatable}{thm}{mainthm}
\label{thm:main}
There exists a randomized algorithm that, given a tournament $G$
on \(n\) vertices and a weight function $w$ on \(G\), runs in time
$O(n^{34})$ and outputs a feedback vertex set $S$ of $G$. With
probability at least $1/2$, $S$ is a $2$-approximate solution of
$(G,w)$.
\end{restatable}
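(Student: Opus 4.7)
The plan is to build an iterative algorithm around the natural LP relaxation of \tfvs, augmented with a randomized guessing step that exploits the structure of tournaments. The LP is: minimize $\sum_{v \in V(G)} w(v)\, x_v$ subject to $x_a + x_b + x_c \geq 1$ for every directed triangle $\{a,b,c\}$ in $G$, together with $0 \leq x_v \leq 1$. There are $O(n^3)$ triangle constraints, so this LP can be solved in polynomial time; let $x^*$ denote an optimum and $\mathrm{LP}^* = \sum_v w(v) x^*_v$.

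The first step is threshold rounding at $1/2$: add every vertex $v$ with $x^*_v \geq 1/2$ to the partial solution, at total cost at most $2\,\mathrm{LP}^*$. After deleting these vertices let $G'$ denote the residual tournament, in which every vertex has $x^*_v < 1/2$. If $G'$ is a DAG we are done. Otherwise the challenge is genuine: on the triangle LP alone we cannot beat the factor $3$ barrier inherited from $3$-Hitting Set, so the heart of the proof must be a tournament-specific argument that goes beyond LP rounding.

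For the residual instance my plan is to exploit the fact that a tournament is acyclic if and only if it admits a transitive ordering, so $G'$ minus an optimum FVS is a total order. I would search for a constant-size combinatorial ``witness'' of some optimum solution, for example a small set of vertices that determines key positions in this transitive order, or that certifies which vertex of a given triangle must be deleted by the optimum. Such witnesses can be enumerated in time $n^{O(1)}$; a uniformly random choice then succeeds with constant probability per level of the recursion, and by repeating the experiment the overall success probability can be boosted to at least $1/2$. Conditioned on a correct witness, the instance should decompose into strictly smaller sub-instances that are solved recursively while maintaining the invariant that the total cost charged is at most $2\,\mathrm{LP}^*$.

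The main obstacle is proving the structural lemma underlying this randomized step: that whenever all LP values lie strictly below $1/2$ and $G'$ contains a triangle, there exists a constant-size signature of an optimum whose knowledge enables a $2$-approximation-preserving reduction to smaller instances. This lemma has to use tournament-specific properties (there is no analogous statement for general $3$-Hitting Set, by the UGC-hardness of beating factor $3$), and its proof is likely a delicate extreme-point and combinatorial analysis of the LP polytope restricted to tournaments. Balancing the enumeration cost of the witness against the depth of the recursion and the LP solve per level should plausibly yield the claimed $O(n^{34})$ running time.
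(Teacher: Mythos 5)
Your proposal is not a proof: the step you yourself flag as ``the main obstacle'' --- the structural lemma asserting that when all LP values are below $1/2$ and triangles remain, some constant-size ``witness'' of an optimum enables a $2$-approximation-preserving decomposition --- is exactly the content of the theorem, and you give no argument for it. Worse, the scaffolding you build around it is provably unworkable: you propose to maintain the invariant that the total cost charged is at most $2\,\mathrm{LP}^*$, but the triangle-covering LP for tournaments has integrality gap tending to $3$. (On a random tournament the largest acyclic induced subtournament has $O(\log n)$ vertices, so the optimum FVS has weight about $n$, while setting $x_v = 1/3$ for every vertex is LP-feasible with value $n/3$.) Hence on such instances \emph{every} feasible solution, including the optimum, costs nearly $3\,\mathrm{LP}^*$, and no algorithm can output an FVS of weight at most $2\,\mathrm{LP}^*$; any analysis anchored to the LP value cannot certify factor $2$. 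So both the anchor and the key lemma are missing.

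The paper's proof uses no LP at all. It distinguishes two cases by the \emph{size} of an optimum solution. If some optimum contains at least $2n/3$ vertices, it deletes the $n/6$ lightest vertices $D$, subtracts $\Delta = \max_{v\in D} w(v)$ from the remaining weights, and shows (Lemma~\ref{lem:largeSoln}) that $D$ together with any $2$-approximate solution of the reduced instance is $2$-approximate for $(G,w)$. Otherwise it samples a pivot $p$ uniformly from the vertices of in- and out-degree at most $8n/9$; with probability at least $1/9$ the pivot lies outside $S_{OPT}$. It then runs the local-ratio-style procedure {\sf Reduce}: repeatedly, for an arc $xy$ with $x\in N^+(p)$, $y\in N^-(p)$ (so $\{x,p,y\}$ is a triangle), it deletes the lighter of $x,y$ and subtracts its weight from the heavier (Lemma~\ref{lem:pickLighterVertex} shows this preserves $2$-approximate $p$-disjoint solutions). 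Once $p$ is triangle-free, Lemma~\ref{lem:InOutNbrhoodSplit} lets the instance split exactly into $G[N^-(p)\setminus D]$ and $G[N^+(p)\setminus D]$, each of size at most $8n/9$, which are solved recursively; $25$ independent pivot trials plus the large-solution branch give success probability at least $1/2$, and the recurrence $T(n)\leq 51\,T(8n/9)+O(n^2)$ gives $O(n^{34})$. If you want to salvage an LP-flavored intuition, the weight-transfer in {\sf Reduce} is essentially the local-ratio technique, but the decomposition driving the recursion comes from the random pivot and the in/out-neighborhood split, not from any extreme-point structure of the triangle polytope.
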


This algorithm can be easily \emph{derandomized} in quasi-polynomial time.

\paragraph{Our Methods.} Our algorithm is inspired by the methods
and analysis of Fixed Parameter Tractable (FPT)-algorithms.  A
well known technique in FPT algorithm is \emph{branching} where we
try to guess if a vertex is in the optimal solution or not.
Similarly, our approximation algorithm tries to randomly sample a
vertex $p$ of the tournament $(i)$ which is \emph{not} contained
in some optimal solution, and $(ii)$ whose in-degree and
out-degree are each at most a constant fraction of $n$.  Assuming
that the size of the optimal solution is upper bounded by a
constant fraction of $n$, the random sampling succeeds with a
constant probability\footnote{When the size of the optimal
  solution is large, the algorithm picks a constant fraction of
  lowest weight vertices into the approximation solution, to
  obtain the reduced instance.}.  With the vertex $p$ in hand, we
reduce the input instance into smaller instances, defined by the
in-neighborhood and the out-neighborhood of $p$, which are then
solved recursively. By the the properties of $p$, the cardinality
of the vertex set of each of these instances is upper-bounded by a
constant fraction of $n$.  This step is reminiscent of reduction
rules that are frequently applied in FPT algorithms and
kernelization.  We show that we can recover a $2$-approximation
for the input instance from $2$-approximate solutions of the
reduced instances, with a constant probability of success.  By
repeated application, this process gradually decomposes the input
instance into a collection of constant size instances, which are
then solved by brute force.  This leads to a 
$2$-approximation algorithm for \tfvs which runs in randomized
polynomial time.  We believe that the connection to FPT algorithms
and analysis is a key feature of our algorithm, which will be
applicable for other problems.

\section{Preliminaries}\label{pre}

In this paper we work with directed graphs (or \emph{digraphs})
that do not contain any self loops or parallel arcs. 
We use $V(G)$ to denote the vertex set of a
digraph \(G\) and $E(G)$ to denote the set of arcs of
$G$.  We use the notation $uv$ to denote 
an arc from vertex $u$ to vertex $v$ in a
digraph. Vertices \(u,v\) are \emph{incident with} arc \(uv\).  A \emph{tournament} is a digraph in which
there is exactly one arc between any two vertices. The set of
\emph{out-neighbors} of a vertex $v$ in a digraph \(G\) is defined
to be $N^{+}(v):=\{u\;\mid\;vu\in E(G)\}$, and the set of
\emph{in-neighbors} of $v$ in \(G\) is defined to be
$N^{-}(v):=\{u\;\mid\;uv\in E(G)\}$. For an integer
\(\ell\geq{}3\) a \emph{directed cycle of length \(\ell\)} in a
digraph \(G\) is an alternating sequence
\(C=v_{1}a_{1}v_{2}a_{2}\dots{}v_{\ell}a_{\ell}\) where
\(\{v_{1}\dots,v_{\ell}\}\subseteq{}V(G)\) is a set of \(\ell\)
distinct vertices of \(G\) and
\(\{a_{1}\dots,a_{\ell}\}\subseteq{}E(G)\) is a subset of arcs of
\(G\) where \(a_{i}=v_{i}v_{i+1};1\leq{}i<\ell\) and
\(a_{\ell}=v_{\ell}v_{1}\). A digraph is \emph{acyclic} if it does
not contain a directed cycle. A \emph{triangle} in a digraph is a directed cycle of length three. In this paper we use
the term ``triangle'' exclusively to denote directed triangles.  A
\emph{topological sort} of a digraph $G$ with \(n\)
vertices is a permutation $\pi:V(G)\mapsto [n]$ of the vertices of
the digraph such that for all arcs $uv\in E(G)$, it is the case that $\pi(u)<\pi(v)$. Such a permutation exists for a digraph
\(G\) if and only if \(G\) is
acyclic~\cite{gutinDigraphsBook}. For an acyclic tournament, the
topological sort is
unique~\cite{gutinDigraphsBook}. \emph{Deleting} a vertex \(v\)
from digraph \(G\) involves removing, from \(G\), the vertex \(v\)
and all those arcs in \(G\) with which \(v\) is
incident in \(G\). We use $G-v$ to denote the digraph obtained by
deleting a vertex \(v\in{}V(G)\) from digraph \(G\). For a vertex
set $S\subseteq{}V(G)$ we use $G - S$ to denotes the digraph
obtained from digraph $G$ by deleting all the vertices of $S$.

A {\em feedback vertex set} (FVS) of a digraph \(G\) is a vertex set $S$ such that $G - S$ is acyclic. A vertex set is a {\em feasible
  solution} if and only if it is an FVS. Given a weight function
$w : V(G) \rightarrow \mathbb{N}$ the {\em weight} of a vertex set
$S$ is $w(S) = \sum_{v \in S} w(v)$. An FVS $S_{OPT}$ of $G$ is an {\em optimal} solution of the instance $(G, w)$ if every other FVS $S$ of $G$ satisfies $w(S) \geq w(S_{OPT})$. 
A FVS $S$ of $G$ is called $2$-{\em approximate solution} of the instance $(G, w)$ if $w(S) \leq 2w(S_{OPT})$ for an optimal solution $S_{OPT}$ of $(G, w)$.
An FVS $S$ is called $p$-disjoint for a vertex $p$ if $p \notin S$, and further, $S$ is said to be an {\em optimal $p$-disjoint FVS of} $(G,w)$ if, for every $p$-disjoint solution $S'$ we have $w(S') \geq w(S)$. Note that an optimal $p$-disjoint solution of $(G,w)$ is not necessarily an optimal solution of $(G, w)$. On the other hand if an optimal solution $S_{OPT}$ of $(G, w)$ happens to be $p$-disjoint then $S_{OPT}$ is also an optimal p-disjoint solution of $G$. A $p$-disjoint FVS $S$ of $G$ is called $2$-{\em approximate $p$-disjoint solution} of the instance $(G, w)$ if $w(S) \leq 2w(S')$ for an optimal $p$-disjoint solution $S'$ of $(G, w)$.

In the following we will assume that $G$ is a
tournament on $n$ vertices, and $w : V(G) \rightarrow \mathbb{N}$
is a weight function. Furthermore, for any induced subgraph $H$ of $G$, we assume that $w$ defines a weight function, when restricted to $V(H)$. 
We will frequently make use of the following lemma which  directly follows from the fact that acyclic digraphs are closed under vertex deletions.

\begin{lemma}\label{lem:hereditarySolutions}
  Let \(S\) be an FVS of a digraph \(G\) and let \(X\) be a subset
  of the vertex set of \(G\). Then \(S\setminus{}X\) is an FVS of
  the digraph \(G-X\). If \(S^{\star}\) is an \emph{optimal}
  solution of an instance \((G,w)\) of \tfvs and \(X\) is a subset
  of \(S^{\star}\) then \(S^{\star}\setminus{}X\) is an
  \emph{optimal} solution of the instance \(((G-X),w)\), of weight
  \(w(S^{\star})-w(X)\).
\end{lemma}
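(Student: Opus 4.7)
The plan is to handle the two assertions of the lemma in sequence, both by direct reduction to the fact that induced subdigraphs of acyclic digraphs are acyclic.

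For the first assertion, I would observe the set-theoretic identity $(G - X) - (S \setminus X) = G - (X \cup S)$, and note that $G - (X \cup S)$ is an induced subdigraph of $G - S$. By hypothesis $G - S$ is acyclic, and since deleting vertices cannot create any new directed cycle, the digraph $(G - X) - (S \setminus X)$ is acyclic as well. Therefore $S \setminus X$ is an FVS of $G - X$.

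For the second assertion, I would first compute the weight: since $X \subseteq S^{\star}$, the set $S^{\star} \setminus X$ has weight $w(S^{\star}) - w(X)$, and its feasibility in $G - X$ follows from the first part. What remains is the optimality claim. I would argue by contradiction: suppose some FVS $T$ of $G - X$ satisfies $w(T) < w(S^{\star}) - w(X)$, and then take $T' := T \cup X$. One checks that (i) $T'$ is an FVS of $G$, because $G - T' = (G - X) - T$ is acyclic; and (ii) $T$ and $X$ are disjoint, since $T \subseteq V(G - X) = V(G) \setminus X$, so $w(T') = w(T) + w(X) < w(S^{\star})$, contradicting the optimality of $S^{\star}$ for $(G,w)$.

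There is no real obstacle here; the whole argument is bookkeeping around the identity $(G - X) - (S \setminus X) = G - (S \cup X)$ together with the hereditary nature of acyclicity. The only small points that require care are ensuring $T$ and $X$ are disjoint before summing their weights in the optimality step, and noting that the weight accounting $w(S^{\star} \setminus X) = w(S^{\star}) - w(X)$ is where the hypothesis $X \subseteq S^{\star}$ is actually used.
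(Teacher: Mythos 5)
Your proof is correct and matches the paper's (implicit) argument: the paper states this lemma without a detailed proof, remarking only that it follows directly from the closure of acyclic digraphs under vertex deletion, which is exactly the heredity fact your identity $(G-X)-(S\setminus X)=G-(S\cup X)$ exploits. Your exchange argument for optimality (lifting a cheaper FVS $T$ of $G-X$ to $T\cup X$ in $G$) is the standard completion of that remark, and your attention to the disjointness of $T$ and $X$ and to where $X\subseteq S^{\star}$ is used is exactly the bookkeeping needed.
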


We use the following lemma to prove the correctness  our algorithm in the later section.

\begin{lemma}\label{lem:InOutNbrhoodSplit}
  Let \((G,w)\) be an instance of \tfvs. 
  \begin{itemize}
  \item[$(i)$] A vertex \(v\in{}G\) is not part of any triangle in \(G\)
    if and only if every arc between a vertex in \(N^{-}(v)\) and
    a vertex in \(N^{+}(v)\) is of the form
    \(xy\;;\;x\in{}N^{-}(v),y\in{}N^{+}(v)\).

  \item[$(ii)$] Let \(x\in{}V(G)\) be a vertex which is not part of any
    triangle in \(G\). Let \(H_{in}=G[N^{-}(x)]\) and
    \(H_{out}=G[N^{+}(x)]\) be the subgraphs induced in \(G\) by
    the in- and out-neighborhoods of vertex \(x\),
    respectively. A set \(S\) is an FVS of digraph \(G\) if and only
    if \(S\cap{}V(H_{in})\) is an FVS of the subgraph \(H_{in}\)
    and \(S\cap{}V(H_{out})\) is an FVS of the subgraph
    \(H_{out}\).
  \end{itemize}
\end{lemma}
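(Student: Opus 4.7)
}
The plan is to prove the two parts in sequence, with part $(ii)$ relying crucially on the structural characterization established in part $(i)$.

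For part $(i)$, I would argue both directions by contrapositive, using the fact that in a tournament every pair of vertices is connected by exactly one arc, so $V(G)\setminus\{v\} = N^{-}(v)\cup N^{+}(v)$ (disjoint union). For the forward direction: suppose some arc between $N^{-}(v)$ and $N^{+}(v)$ has the ``wrong'' orientation, say $yx$ with $x\in N^{-}(v)$ and $y\in N^{+}(v)$; then $vy$, $yx$, $xv$ are all arcs of $G$, so $v,y,x$ form a triangle through $v$. For the backward direction: if $v$ lies on a triangle $v\to a\to b\to v$, then $a\in N^{+}(v)$, $b\in N^{-}(v)$, and the arc $ab$ is an arc from $N^{+}(v)$ to $N^{-}(v)$, violating the hypothesis. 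This is essentially a single-line contrapositive argument on each side.

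For part $(ii)$, the forward direction is immediate from Lemma~\ref{lem:hereditarySolutions}: if $S$ is an FVS of $G$ then $G-S$ is acyclic, and $H_{in}-(S\cap V(H_{in}))$ and $H_{out}-(S\cap V(H_{out}))$ are induced subgraphs of $G-S$ and therefore also acyclic. The backward direction is the main content, and here I would use part $(i)$ as a structural lemma. Assuming $S\cap V(H_{in})$ and $S\cap V(H_{out})$ are FVSs of $H_{in}$ and $H_{out}$ respectively, I would argue by contradiction: suppose $G-S$ contains a directed cycle $C$. By part $(i)$, the arc set of $G$ respects the ``layered'' ordering $N^{-}(x)\,\to\,\{x\}\,\to\,N^{+}(x)$ together with $N^{-}(x)\to N^{+}(x)$; that is, no arc goes from $\{x\}\cup N^{+}(x)$ back to $N^{-}(x)$, and no arc goes from $N^{+}(x)$ back to $\{x\}$. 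Consequently, once a directed walk leaves $N^{-}(x)$ or passes through $x$, it can never return to $N^{-}(x)$. This rules out $x\in V(C)$ (any cycle through $x$ would have to come back to $x$ from $N^{-}(x)$ after starting into $N^{+}(x)$), and it rules out $C$ having vertices in both $N^{-}(x)$ and $N^{+}(x)$. Hence $V(C)\subseteq V(H_{in})$ or $V(C)\subseteq V(H_{out})$, which contradicts the assumption that the two intersections are FVSs of the respective subgraphs.

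The only subtle step is the argument in the backward direction of $(ii)$ showing that a cycle cannot use $x$ and cannot cross between $N^{-}(x)$ and $N^{+}(x)$; I expect this to be the main (and only) obstacle, and it is handled cleanly by the ``forward flow'' consequence of part $(i)$.
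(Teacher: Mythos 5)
Your proposal is correct and follows essentially the same route as the paper: part $(i)$ via the direct triangle/arc correspondence, the forward direction of $(ii)$ from heredity of acyclicity, and the backward direction of $(ii)$ by using part $(i)$ to get the layered structure $N^{-}(x)\to\{x\}\to N^{+}(x)$. The only (cosmetic) difference is in the last step: the paper concatenates the topological sorts of $H_{in}-S$, $x$, and $H_{out}-S$ into an explicit topological sort of $G-S$, while you argue by contradiction that a hypothetical cycle in $G-S$ could neither pass through $x$ nor cross between the two neighborhoods — both arguments rest on the same structural fact and are equally valid.
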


\begin{proof}
   
  Suppose vertex \(v\) is not part of any triangle in \(G\). If
  there is an arc \(st\) in \(G\) where vertex \(s\) is in the
  out-neighborhood \(N^{+}(v)\) of vertex \(v\) and vertex \(t\)
  is in its in-neighborhood \(N^{-}(v)\) then the vertices
  \(\{s,v,t\}\) form a triangle containing vertex \(v\), a
  contradiction. So every arc between vertices \(x\in{}N^{-}(v)\)
  and \(y\in{}N^{+}(v)\) is directed from \(x\) to
  \(y\). Conversely, if vertices \(\{v,s,t\}\) form a triangle
  and---without loss of generality---\(vs\) is an arc in \(G\)
  then we have that both \(st\) and \(tv\) are arcs in \(G\). Thus
  \(s\in{}N^{+}(v),t\in{}N^{-}(v)\), and arc \(st\) is not of the
  form \(xy\;;\;x\in{}N^{-}(v),y\in{}N^{+}(v)\).

Now prove statement $(ii)$ of the lemma. Let $S$ be an FVS of $G$. As $H_{in}-(S\cap V(H_{in}))$ and $H_{out}-(S\cap V(H_{out}))$ are subgraphs of  $G-S$ (which is a DAG), we have that  \(S\cap{}V(H_{in})\) is an FVS of  \(H_{in}\)   and \(S\cap{}V(H_{out})\) is an FVS of    \(H_{out}\). 
Now we prove the other direction. Let $S\subseteq V(G)$ be such that \(S\cap{}V(H_{in})\) is an FVS of  \(H_{in}\)   and \(S\cap{}V(H_{out})\) is an FVS of    \(H_{out}\). Since $H_{in}-S$ is an acyclic tournament, there is a unique topological sort $u_1,\ldots,u_{\ell}$ of $H_{in}-S$,  where $\{u_1,\ldots,u_{\ell}\}=V(H_{in})\setminus S$. Also, since  $H_{out}-S$ is an acyclic tournament, there is a unique topological sort $v_1,\ldots,v_{\ell'}$ of $H_{out}-S$, where $\{v_1,\ldots,v_{\ell'}\}=V(H_{out})\setminus S$. Since $x$ is not part of a triangle in $G$, by statement $(i)$ of the lemma, there is no arc from  a vertex in $\{v_1,\ldots,v_{\ell'}\}$ to a vertex in $\{u_1,\ldots,u_{\ell}\}$. This implies that $u_1,\ldots,u_{\ell},x,v_1,\ldots,v_{\ell'}$ is a topological sort of $G-S$. Therefore $S$ is an FVS of $G$. 
\end{proof}

\section{The Algorithm}\label{algo}

We begin with an informal overview. Let \(G\) be a digraph and
\(w:V(G)\to\mathbb{N}\) be a weight function on the vertices of
\(G\). If \(S\) is an \emph{optimal} FVS for the instance
\((G,w)\) and \(v\) is a vertex in \(S\) then
(\autoref{lem:hereditarySolutions}) \(S\setminus{}\{v\}\) is an
optimal FVS of the instance \((G-v, w)\), and its weight is
exactly \(w(S)-w(v)\). Note that this need not be the case for
vertices \emph{outside} of \(S\); deleting a vertex \(x\notin{}S\)
may not bring down the weight of an optimal FVS. As a simple
example, consider the tournament on four vertices \(\{a,b,c,x\}\)
where (i) \(\{a,b,c\}\) form a triangle, (ii) vertex \(x\) has
in-degree three, and (iii) all vertices have weight one. An
optimum FVS of this instance consists of any one of the three
vertices \(\{a,b,c\}\) and has weight one. An optimum FVS of the
digraph \(G-x\) is also of this same form, and has weight one as
well.

Thus if we are given the promise that a vertex \(v\) is in some
optimal FVS of \((G,w)\) then we can safely delete \(v\) from
\(G\) and recursively find an optimal FVS \(S'\) of the smaller
instance \((G-v,w)\), to get an optimal FVS \(S'\cup\{v\}\) of the
original instance \((G,w)\). If we don't know that vertex \(v\) is
in some optimal FVS of \((G,w)\) then we cannot safely make such a
reduction.

It turns out that if we are willing to accept the lesser promise
of ``half a vertex'' being in an optimal solution then we
\emph{can} safely make an analogous reduction which preserves a
\emph{2-approximate} solution for the \tfvs\ instance. More precisely, suppose we are given a pair of vertices \(u,v\in{}V(G)\;;\;w(v)\leq{}w(u)\)
and the promise that some optimal solution contains \emph{at
  least one} out of \(\{u,v\}\). Then---see
\autoref{lem:pickLighterVertex} (with an assumption that there is an optimal solution not containing $p$)---vertex \(v\) must belong to some
2-approximate solution for the instance \((G,w)\). Indeed, if we
delete \(v\) from \(G\) and reduce the weight of vertex \(u\) by
\(w(v)\) to get a smaller instance, then for \emph{any}
2-approximate solution \(S'\) of this smaller instance, the set
\(S'\cup\{v\}\) is a 2-approximate solution of the original instance
\((G,w)\).

So to find a 2-approximate solution for \tfvs it is enough
to---repeatedly---find pairs of vertices with the guarantee that
there is an optimal solution which contains at least one of these
two vertices. For this we use the observation that a tournament
contains a directed cycle if and only if it contains a directed
triangle. Let \(G\) be a tournament and \(\{u,v,x\}\) the vertex
set of a directed triangle in \(G\). If there is an optimal
solution which does \emph{not} contain vertex \(x\) then
\(\{u,v\}\) is a pair of vertices with the required property. So
it is enough to be able to repeatedly find a vertex which (i)
belongs to a directed triangle, and (ii) is not part of some
optimal solution. Call a vertex which has these two properties, an
``unimportant'' vertex.

If we could consistently find an unimportant vertex with some good
probability then we could solve the problem with a good
probability of success. One way to do this would be
to---somehow---ensure that a constant fraction---say, \(1/3\)---of
the entire vertex set is unimportant; a vertex picked uniformly at
random would then be unimportant with probability \(1/3\). So the
``bad case'' is when only a very small part of the vertex set is
unimportant; equivalently, when a large fraction of the vertex
set---here, \(2/3\)---is part of \emph{every} optimal
solution. This in turn implies that there \emph{is} an optimal
solution which contains a large fraction---\(2/3\)---of the vertex
set. If we can---somehow---process those cases where there is an
optimal solution which contains a very large fraction of the
vertex set then we will be able to consistently find unimportant
vertices with good probability.

Let \(S\) be an optimal solution which contains more than \(2/3\)
of the vertex set of \(G\). Consider the set \(L\) of the
\(|V(G)|/6\) vertices of the \emph{smallest} weight in \(G\). Then
the weight of the vertex set \(L\) is at most a quarter
(\(=\frac{1}{6}/\frac{2}{3}\)) of the weight of the optimum
\(S\). This suggests that picking all of \(L\) into a solution
should not result in a solution which is heavier than the optimum
by a factor of \(5/4\). Indeed, something stronger holds for
2-approximate solutions. We show---see
\autoref{lem:largeSoln}---that there is a 2-approximate solution
which contains \emph{all} of \(L\). Indeed, we can delete \(L\)
from \(G\) and modify the weights of the remaining vertices in a
certain way to get an instance \((G-L, w')\) such that for
\emph{any} 2-approximate solution \(S'\) of \((G-L, w')\), the set
\(L\cup{}S'\) is a 2-approximate solution for the original instance
\((G, w)\).

We now give a high level conceptual sketch of the algorithm,
hiding some details required for getting good bounds on the
running time and success probability.  Our algorithm has two
phases. In each phase it computes a feasible solution, and at the end it returns the solution of smaller weight among these two. We prove---along the lines suggested by the above discussion---that at least one of these solutions must be a 2-approximate solution. Recall that \((G,w)\) denotes the input instance where \(G\) has \(n\) vertices.

\textbf{Phase~1} of the algorithm computes a \emph{candidate}
2-approximate solution \(A_{1}\) for \((G,w)\) \emph{assuming}
that there is an \emph{optimum} solution \(S\) with
\(|S|\geq{}\frac{2n}{3}\). To do this the algorithm deletes the
set \(L\) of the \(n/6\) vertices of the \emph{smallest} weight in
\(G\), modifies the weights of the remaining vertices in as
specified in \autoref{lem:largeSoln}, and recursively finds a
2-approximate solution \(B_{1}\) of the resulting instance
\((G-L,w')\). The candidate 2-approximate solution from this step
is \(A_{1}=L\cup{}B_{1}\).

\textbf{Phase~2} of the algorithm computes another candidate
2-approximate solution \(A_{2}\) for \((G,w)\) assuming that
\emph{no} optimum solution has \(2n/3\) or more vertices. To do
this the algorithm picks a ``pivot'' vertex \(p\) at random. 
If \(p\) is \emph{not} part of any triangle in \(G\) then the
algorithm recursively finds 2-approximate solutions
\(S_{1},S_{2}\) of the subgraphs \(H_{in}\) and \(H_{out}\)
induced by the in- and out-neighborhoods of vertex \(p\),
respectively, and sets the candidate 2-approximate solution from
this phase to be \(A_{2}=S_{1}\cup{}S_{2}\). This is safe by
\autoref{lem:InOutNbrhoodSplit}.

If the pivot vertex \(p\) \emph{is} part of some triangle in \(G\)
then the algorithm assumes that \(p\) is unimportant, and applies a
\emph{reduction procedure} to obtain an instance where vertex
\(p\) is not in any triangle. This procedure chooses two vertices
\(\{u,v\}\;;\;w(v)\leq{}w(u)\) which form a triangle together with
\(p\). It then deletes \(v\) from \(G\) and modifies\footnote{See
  \autoref{lem:largeSoln} for the specifics.} the weight of \(u\)
to get a new instance \((G-v,w')\). The reduction procedure
consists of the repeated application of this step as long as the
pivot vertex \(p\) is part of some triangle, and stops when it
obtains a subgraph \(H\) in which vertex \(p\) is not part of any
triangle. Now the algorithm recurses on the in and
out-neighborhoods of \(p\) in digraph \(H\) as described in the
previous paragraph, to get a 2-approximate solution \(B_{2}\). The
candidate 2-approximate solution from this phase is
\(A_{2}=D\cup{}B_{2}\) where \(D\) is the set of all vertices
\(v\) deleted from \(G\) by the reduction step to get to the digraph
\(H\).  If \(w(A_{1})<w(A_{2})\) then the algorithm outputs
\(A_{1}\); otherwise it outputs \(A_{2}\).

To prove that this recursive procedure runs in polynomial time we
need to ensure that neither of the digraphs \(H_{in},H_{out}\) in the
recursive step is ``too small''; more specifically, that the
number of vertices in each of \(H_{in},H_{out}\) is
\emph{upper}-bounded by a fraction of the number of vertices \(n\)
in the digraph \(G\) given as input to Phase 2. We enforce this by
picking the pivot vertex \(p\) from among those vertices of \(G\)
whose in- and out-degrees are upper-bounded by a certain fraction
of \(n\).

In the rest of this section we give a more formal description of
the algorithm, prove its correctness, and show that it runs in
polynomial time. We begin by proving a couple of lemmas which
formalize some ideas from the above discussion. Our first lemma
pertains to the case when there is an optimal solution which
contains a large fraction of the vertex set.

\begin{lemma}\label{lem:largeSoln}
  Let \((G,w)\) be an instance of \tfvs where \(G\) has \(n\)
  vertices, and which has an optimal solution \(S^{\star}\) that
  contains at least \(2n/3\) vertices of \(G\).  Let
  $D \subseteq V(G)$ be a set of $\frac{n}{6}$ vertices of the
  smallest weight in $V(G)$, ties broken arbitrarily, and let
  $\Delta = \max_{v \in D} w(v)$ be the weight of the heaviest
  vertex in \(D\). 
Let \(w'\colon V(G)\setminus D \rightarrow {\mathbb N}\) be the weight function  which assigns the weight \(w(v)-\Delta\) to each vertex \(v\) of \(G-D\).
If $R_{approx}$ is a $2$-approximate solution of the reduced
instance $(G - D, w')$ then $R_{approx} \cup D$ is a
$2$-approximate solution of the instance $(G, w)$.
\end{lemma}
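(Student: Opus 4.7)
The plan is to first verify feasibility and then manipulate the weight of $R_{approx}\cup D$ using the relation $w(v)=w'(v)+\Delta$ on $V(G)\setminus D$, and finally bound everything using the hypothesis $|S^{\star}|\geq 2n/3$.

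For feasibility, I would observe that $R_{approx}$ is an FVS of $G-D$, so $(G-D)-R_{approx}=G-(D\cup R_{approx})$ is acyclic, hence $R_{approx}\cup D$ is an FVS of $G$. (Since every vertex outside $D$ has weight at least $\Delta$, the function $w'$ is $\mathbb{N}$-valued, so the reduced instance is a valid \tfvs\ instance.) Next, I would establish the key identity $w(R_{approx}\cup D)=w'(R_{approx})+\Delta\cdot|R_{approx}|+w(D)$, which follows because $R_{approx}\cap D=\emptyset$ and $w(v)=w'(v)+\Delta$ for each $v\in R_{approx}$. By \autoref{lem:hereditarySolutions}, $S^{\star}\setminus D$ is an FVS of $G-D$, so the optimum of $(G-D,w')$ is at most $w'(S^{\star}\setminus D)=w(S^{\star}\setminus D)-\Delta\cdot|S^{\star}\setminus D|$. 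Combining these with the 2-approximation guarantee on $R_{approx}$ yields
\[
 w(R_{approx}\cup D)\;\leq\; 2w(S^{\star}\setminus D)-2\Delta\cdot|S^{\star}\setminus D|+\Delta\cdot|R_{approx}|+w(D).
\]

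To finish, I would show that the ``error'' term $-2\Delta\cdot|S^{\star}\setminus D|+\Delta\cdot|R_{approx}|+w(D)$ is nonpositive, which, together with $2w(S^{\star}\setminus D)\leq 2w(S^{\star})$, gives the desired 2-approximation. The three ingredients are: (i) $|S^{\star}\setminus D|\geq|S^{\star}|-|D|\geq 2n/3-n/6=n/2$ from the hypothesis on $|S^{\star}|$; (ii) $|R_{approx}|\leq|V(G)\setminus D|=5n/6$; and (iii) $w(D)\leq|D|\cdot\Delta=(n/6)\Delta$ by definition of $\Delta$. Plugging these in bounds the error by $-\Delta n+\Delta(5n/6)+\Delta(n/6)=0$. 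The main obstacle is choosing the parameters $|D|=n/6$ and $\Delta=\max_{v\in D}w(v)$ just right so that this tight cancellation happens; once one sees that $2\cdot(\text{lower bound on }|S^{\star}\setminus D|)=|R_{approx}|_{\max}+|D|$, everything falls into place, and no step beyond elementary arithmetic is required.
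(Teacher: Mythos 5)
Your proposal is correct and follows essentially the same route as the paper: feasibility via vertex deletion, \autoref{lem:hereditarySolutions} to bound the optimum of $(G-D,w')$ by $w'(S^{\star}\setminus D)$, the shift $w(v)=w'(v)+\Delta$, and the bounds $|S^{\star}\setminus D|\geq n/2$, $w(D)\leq|D|\Delta$, and $|R_{approx}\cup D|\leq n$ (which you split as $|R_{approx}|\leq 5n/6$ and $|D|=n/6$). The arithmetic and cancellation are identical to the paper's, so there is nothing further to add.
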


\begin{proof}
  Let \(R^{\star}\) be an optimum solution of the reduced instance
  $(G - D, w')$. Then \(w'(R_{approx})\leq{}2w'(R^{\star})\). 
  From \autoref{lem:hereditarySolutions} we get that
  \(S^{\star}\setminus{}D\) is a---not necessarily
  optimal---solution of the reduced instance \((G - D,
  w')\). Since \(R^{\star}\) is an optimum solution of this
  instance we have that
  \(w'(S^{\star}\setminus{}D)\geq{}w'(R^{\star})\). Since
  \(w'(v)=(w(v)-\Delta)\) holds for each vertex
  \(v\in{}(S^{\star}\setminus{}D)\) we get that
  \(w'(S^{\star}\setminus{}D) = (w(S^\star \setminus D)-|S^{\star}\setminus{}D|\cdot{}\Delta) \leq(w(S^{\star})-|S^{\star}\setminus{}D|\cdot{}\Delta)\). Since
  \(|S^{\star}\setminus{}D|\geq{}(\frac{2n}{3}-\frac{n}{6})=\frac{n}{2}\)
  we get that
  \(w'(S^{\star}\setminus{}D)\leq(w(S^{\star})-\frac{\Delta\cdot{}n}{2})\).
  Hence
  \(w'(R^{\star})\leq{}w'(S^{\star}\setminus{}D)\leq{}(w(S^{\star})-\frac{\Delta\cdot{}n}{2})\). 

  Thus
  \(w'(R_{approx})\leq{}2w'(R^{\star})\leq(2w(S^{\star})-\Delta\cdot{}n)\). Since
  the set \(R_{approx}\) is disjoint from the deleted set \(D\) we
  have that \(w'(v)=w(v)-\Delta\) holds for each vertex
  \(v\in{}R_{approx}\). Hence
  \(w(R_{approx}) = w'(R_{approx})+|R_{approx}|\cdot{}\Delta \leq
  (2w(S^{\star})-\Delta\cdot{}n)+|R_{approx}|\cdot{}\Delta =
  (2w(S^{\star})-\Delta(n-|R_{approx}|))\).
  Since \(w(v) \leq \Delta\) holds for each vertex \(v\in{}D\) we
  have that \(w(D)\leq{}|D|\cdot{}\Delta\). 
Hence
\begin{eqnarray*}
 w(R_{approx}\cup{}D) &=& w(R_{approx})+w(D)\\
& \leq&  (2w(S^{\star})-\Delta(n-|R_{approx}|)+|D|\cdot{}\Delta)\\
& =&  (2w(S^{\star})-\Delta(n-|R_{approx}|-|D|))\\
& =&  (2w(S^{\star})-\Delta(n-|R_{approx}\cup{}D|))\\
& \leq&  2w(S^{\star}).
\end{eqnarray*}
Here the last inequality follows from the
  fact that \(|R_{approx}\cup{}D| \leq n = |V(G)|\).
\end{proof}

The next lemma shows that given $\{p,u,v\}$, we can safely pick a lighter weight vertex of the two vertices $u$ and $v$ 
into a  2-approximate $p$-disjoint solution.

\begin{lemma}\label{lem:pickLighterVertex}
  Let   \((G,w)\) be an instance of \tfvs\ and $p\in V(G)$.  Let \(\{u,v\}\) be two vertices such that
  (i) \(\{p, u,v\}\) form a triangle in $G$, and
  (ii) \(w(v)\leq{}w(u)\). Let \(w'\) be the weight function
  defined by: $(a)$ \(w'(v)=0\) ,$(b)$ \(w'(u)=w(u)-w(v)\), and $(c)$
  \(w'(x)=w(x)\) for all vertices \(x\notin{}\{u,v\}\). 
Then for every  $2$-approximate $p$-disjoint solution $R_{approx}$ of the reduced instance \((G-v,w')\), we have $R_{approx} \cup \{v\}$ is a $2$-approximate $p$-disjoint solution of the original instance $(G, w)$.
\end{lemma}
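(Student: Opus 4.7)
The plan is to first verify feasibility and then bound the weight through a chain of inequalities that tracks how weight is transferred from $v$ to $u$ in the definition of $w'$.

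First I would show that $R_{approx}\cup\{v\}$ is indeed a $p$-disjoint FVS of $G$. Since $R_{approx}$ is an FVS of $G-v$ and deleting $v$ from $G$ then deleting $R_{approx}$ yields an acyclic digraph, $R_{approx}\cup\{v\}$ is an FVS of $G$. It is $p$-disjoint because $p\notin R_{approx}$ by hypothesis, and $p\neq v$ since $p,u,v$ are the three distinct vertices of a triangle.

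Next I would set up the weight bound. Let $S^\star$ be an optimal $p$-disjoint FVS of $(G,w)$. Because $\{p,u,v\}$ is a triangle and $p\notin S^\star$, the set $S^\star$ must contain at least one of $u,v$. By \autoref{lem:hereditarySolutions}, $S^\star\setminus\{v\}$ is a $p$-disjoint FVS of $G-v$. I would then compute $w'(S^\star\setminus\{v\})$ by a short case analysis: whether $v\in S^\star$ or not, and whether $u\in S^\star$ or not. In each of the three feasible combinations, using $w'(u)=w(u)-w(v)$, $w'(v)=0$, and $w'$ agreeing with $w$ elsewhere, the computation yields $w'(S^\star\setminus\{v\})\leq w(S^\star)-w(v)$. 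Hence if $T^\star$ is an optimal $p$-disjoint FVS of $(G-v,w')$, then $w'(T^\star)\leq w(S^\star)-w(v)$.

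Now I would exploit the 2-approximation guarantee. Since $R_{approx}$ is a $2$-approximate $p$-disjoint solution of $(G-v,w')$, we have $w'(R_{approx})\leq 2w'(T^\star)\leq 2w(S^\star)-2w(v)$. To convert this into a bound on $w(R_{approx})$, I would observe that $w$ and $w'$ differ only on $u$ (by exactly $w(v)$) and on $v$ (which is absent from $R_{approx}$). Therefore $w(R_{approx})\leq w'(R_{approx})+w(v)$. Putting these together gives
\[
w(R_{approx}\cup\{v\})=w(R_{approx})+w(v)\leq w'(R_{approx})+2w(v)\leq 2w(S^\star),
\]
which, together with the feasibility established in the first step, proves that $R_{approx}\cup\{v\}$ is a $2$-approximate $p$-disjoint solution of $(G,w)$.

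The only delicate point is the case analysis in the middle step, since the bound $w'(S^\star\setminus\{v\})\leq w(S^\star)-w(v)$ is saved by the fact that if $v\in S^\star$ then a full $w(v)$ is removed, and if $v\notin S^\star$ then $u\in S^\star$ and the weight of $u$ has been reduced by exactly $w(v)$. This symmetric accounting, driven by the hypothesis $w(v)\leq w(u)$, is the key observation; the rest is routine algebra.
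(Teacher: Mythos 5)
Your proof is correct and follows essentially the same route as the paper: feasibility via $(G-v)-R_{approx}=G-(R_{approx}\cup\{v\})$, then the observation that $S^\star\setminus\{v\}$ is a $p$-disjoint FVS of $G-v$ combined with the $2$-approximation guarantee and the bookkeeping of how $w$ and $w'$ differ on $u$ and $v$. The paper merely compresses your case analysis into one inequality chain using $\Delta=w(v)$ and the term $\lvert S^\star\cap\{u,v\}\rvert\geq 1$, which is exactly your bound $w'(S^\star\setminus\{v\})\leq w(S^\star)-w(v)$.
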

\begin{proof}

Since
  \((G-v)-R_{approx}=G-(R_{approx}\cup\{v\})\) and the former
  digraph is acyclic by assumption, we get that
  \(R_{approx}\cup\{v\}\) is a FVS in the digraph \(G\). We will show
  that  $R_{approx}\cup\{v\}$ is a $2$-approximate $p$-disjoint solution of $(G,w)$. 
Since $p\notin R_{approx}$, $R_{approx}\cup\{v\}$ is a $p$-disjoint FVS of $G$.  
Let $S^{\star}$ be an optimal $p$-disjoint solution of $(G,w)$. Notice that $S^{\star}\cap \{u,v\}\neq \emptyset$. 
Now to complete the proof, it  remains to show that \(w(R_{approx}\cup\{v\})\leq{}2w(S^{\star})\).
Let $\Delta=\min\{w(u),w(v)\}$, that is $w(v)=\Delta$. Now we have the following.
\begin{align*}
w(R_{approx}\cup\{v\})& = w'(R_{approx}\cup\{v\})+2\Delta  & \mbox{since }w(v)=\Delta \mbox{ and } w(u)=\Delta+w'(u)\\
&= w'(R_{approx})+2\Delta & \mbox{since }w'(v)=0\\
&\leq 2 w'(S^{\star}\setminus \{v\})+2\Delta   & \mbox{since $S^{\star}\setminus \{v\}$ is an FVS of $G-v$}\\
& = 2 w'(S^{\star})+2\Delta & \mbox{since }w'(v)=0\\
& = 2 \big(w(S^{\star}) -\Delta \cdot \vert S^{\star}\cap \{u,v\}\vert\big) +2\Delta  \\
&\leq 2 w(S^{\star}) & \mbox{since }  S^{\star}\cap \{u,v\}\neq \emptyset.
\end{align*}
This completes the proof. 
\end{proof}

Recall that in Phase~2 we work under the assumption that there is
an optimal solution \(S^{\star}\) of \((G,w)\) which does not
contain the pivot vertex \(p\).  If there is an arc $xy \in E(G)$
such that $x \in N^+(p) \setminus D_i$ and
$y \in N^-(p) \setminus D_i$ then the vertices \(\{x,p,y\}\) form
a triangle in \(G\), and so at least one of the two vertices
\(\{x,y\}\) must be present in the solution \(S^{\star}\). Let
\(v\) be a vertex of the least weight among \(\{x,y\}\), ties
broken arbitrarily, and let \(u\) be the other vertex.  Then
\autoref{lem:pickLighterVertex} applies to the tuple
\(\{(G,w),p,\{u,v\}\}\).

Procedure \({\sf Reduce}(G,w,p)\) of \autoref{alg:reduce}
\vpageref{alg:reduce} implements the reduction procedure of Phase
2. It starts by setting $D_0 = \emptyset$, $w_0 = w$, and $i = 0$.
As long as there is an arc $xy \in E(G)$ such that
$x \in N^+(p) \setminus D_i$ and $y \in N^-(p) \setminus D_i$ it
finds vertices \(\{u,v\}\) as described in the previous paragraph
and computes a weight function \(w'\) as specified in
\autoref{lem:pickLighterVertex} as applied to the collection
\(\{(G,w),p,\{u,v\}\}\). It sets $w_{i + 1} = w'$,
$D_{i+1} = D_i \cup \{v\}$, increments $i$ by one, and
repeats. When no such arc $xy$ exists the procedure outputs the
set $D = D_i$ and the weight function $\tilde{w} = w_i$.

\begin{algorithm}
  \caption{The reduction procedure of Phase 2.}\label{alg:reduce}
  \begin{algorithmic}[1]
    \Procedure{\sf Reduce}{\,\(G,w,p\)\,} 
      \State \(D_{0} \gets \emptyset\); \(w_{0} \gets w\); \(i \gets 0\)
      \While{\(G\) has an arc \(xy\;;\;x\in(N^{+}(p)\setminus{}D_{i}),y\in(N^{-}(p)\setminus{}D_{i})\)}\label{alg:reduce_while_start}
       \If{\(w_{i}(x) \leq w_{i}(y)\)}\label{alg:reduce_if_start} \Comment{definition of the vertices $u$ and $v$}
          \State \(v \gets x\);  \(u \gets y\)
        \Else
          \State \(v \gets y\); \(u \gets x\)
        \EndIf

        \State \(w_{i}(u) \gets w_{i}(u)-w_{i}(v)\)
        \State \(w_{i}(v) \gets 0\)
        \State \(w_{i+1} \gets w_{i}\)\label{alg:reduce_w_prime_update} \Comment{\(w_{i+1}\) is now the weight function $w'$ from the discussion}
        \State \(D_{i+1} \gets D_{i}\cup{}\{v\}\)\label{alg:reduce_D_update} 
        \State \(i \gets i+1\)
      \EndWhile\label{alg:reduce_while_end}

      \State \(D \gets D_{i}\); \(\tilde{w} \gets w_{i}\)\label{alg:reduce_final}
      \State \textbf{return} \((D,\tilde{w})\)    
    \EndProcedure
  \end{algorithmic}
\end{algorithm}

Our next lemma states that procedure {\sf Reduce} runs in
polynomial time and correctly outputs a reduced instance. Recall
that for an instance \((G,w)\) of \tfvs and a vertex
\(p\in{}V(G)\), a \emph{\(p\)-disjoint solution} of \((G,w)\) is
an FVS of \(G\) which does not contain vertex \(p\).

\begin{lemma}\label{lem:reduce}
  Let \((G,w)\) be an instance of \tfvs\  and $p\in V(G)$. 
When given $(G, w, p)$ as input, the procedure
  {\sf Reduce} runs in $O(|V(G)|^{2})$ time and outputs a
  vertex set $D \subseteq (V(G)\setminus{}\{p\})$ and a weight
  function $\tilde{w}$ with the following properties:
  \begin{itemize}
  \item[$(i)$] there are no arcs from $N^+(p)$ to $N^-(p)$ in digraph
    \(G-D\), and
    \item[$(ii)$] for every $2$-approximate $p$-disjoint solution $S$ of
     \((G-D, \tilde{w})\),  
      the set $S \cup D$ is a
      $2$-approximate $p$-disjoint solution of $(G, w)$.
  \end{itemize}
\end{lemma}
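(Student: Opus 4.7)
The plan is to verify each claim of the lemma separately, in the natural order: termination and running time, then property~(i) from the termination condition, and finally property~(ii) by induction using \autoref{lem:pickLighterVertex}.

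First I would observe that each iteration of the while loop strictly grows \(D_{i}\) by one vertex, and that \(D_{i}\subseteq{}V(G)\setminus\{p\}\) throughout, since the vertex \(v\) added at any iteration lies in \(N^{+}(p)\cup{}N^{-}(p)\) and hence differs from \(p\). This bounds the number of iterations by \(n-1\). For the \(O(n^{2})\) running time I would maintain, across iterations, a list of ``violating'' arcs \(xy\) with \(x\in{}N^{+}(p)\setminus{}D_{i}\), \(y\in{}N^{-}(p)\setminus{}D_{i}\). Initialising this list costs \(O(n^{2})\) time, and when a vertex \(v\) is added to \(D_{i}\) we remove from the list all arcs incident with \(v\), which amortises to \(O(n^{2})\) total work over the whole execution.

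Property~(i) then follows immediately from the termination condition: when the loop exits, no arc \(xy\in{}E(G)\) satisfies \(x\in{}N^{+}(p)\setminus{}D\) and \(y\in{}N^{-}(p)\setminus{}D\); since \(p\notin{}D\), the in- and out-neighbourhoods of \(p\) in \(G-D\) are exactly \(N^{-}(p)\setminus{}D\) and \(N^{+}(p)\setminus{}D\), giving~(i).

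Property~(ii) I would prove by induction on the iteration index~\(i\), with the inductive hypothesis that, for every \(2\)-approximate \(p\)-disjoint solution \(S\) of the instance \((G-D_{i},w_{i})\), the set \(S\cup{}D_{i}\) is a \(2\)-approximate \(p\)-disjoint solution of \((G,w)\). The base case \(i=0\) is trivial since \(D_{0}=\emptyset\) and \(w_{0}=w\). For the inductive step, at iteration~\(i\) the algorithm fixes an arc \(xy\) witnessing the while condition, so the vertices \(\{p,x,y\}\) form a triangle in \(G-D_{i}\) (the arcs \(px\), \(xy\), \(yp\) all survive since none of \(\{p,x,y\}\) lies in \(D_{i}\)); the chosen \(v\) is the lighter of \(\{x,y\}\) under \(w_{i}\), and \(u\) is the other. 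Applying \autoref{lem:pickLighterVertex} to the instance \((G-D_{i},w_{i})\) with pivot \(p\) and the pair \(\{u,v\}\), any \(2\)-approximate \(p\)-disjoint solution of \((G-D_{i+1},w_{i+1})=((G-D_{i})-v,w')\) yields, upon adding \(v\), a \(2\)-approximate \(p\)-disjoint solution of \((G-D_{i},w_{i})\); the inductive hypothesis then delivers a \(2\)-approximate \(p\)-disjoint solution of \((G,w)\) after further adding \(D_{i}\), which is \(D_{i+1}\) in total.

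The main obstacle I anticipate is purely notational: checking carefully that the instance and weight function on which \autoref{lem:pickLighterVertex} is invoked at iteration~\(i\) are exactly \((G-D_{i},w_{i})\) rather than \((G,w)\), and that the triangle \(\{p,u,v\}\) used by that lemma lies entirely inside \(G-D_{i}\). Both facts hold by construction of the algorithm, but they need to be stated cleanly so that the induction closes without ambiguity.
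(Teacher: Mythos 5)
Your proposal is correct and takes essentially the same route as the paper: property~(i) is read off the while-loop's termination condition, property~(ii) is the induction over iterations via \autoref{lem:pickLighterVertex} (which you spell out in more detail than the paper's one-line appeal, correctly applying the lemma to the instance \((G-D_{i},w_{i})\) and noting \((G-D_{i+1},w_{i+1})=((G-D_{i})-v,w')\)), and the \(O(n^{2})\) bound comes from bookkeeping of the violating arcs. Your arc-list with amortized deletions is just an equivalent variant of the paper's adjacency-matrix/out-degree-counter data structure, giving the same bound.
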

\begin{proof}

  The check on line~\ref{alg:reduce_while_start} of
  \autoref{alg:reduce} fails if and only if there are no arcs from
  $N^+(p)$ to $N^-(p)$ in the digraph \(G-D_{i}\) for the value of
  \(i\) at that point. Since the assignment of \(D_{i}\) to \(D\)
  on line~\ref{alg:reduce_final} happens only if this check fails,
  we get that there are no arcs from $N^+(p)$ to $N^-(p)$ in the
  digraph \(G-D\).
  Let $S$ be a $2$-approximate $p$-disjoint solution of 
  \((G-D, \tilde{w})\). 
Then by a simple induction on the number of iterations and \autoref{lem:pickLighterVertex}, we obtain that
$S\cup D$ is a $2$-approximate $p$-disjoint solution of $(G,w)$. 

To complete the proof we show that procedure \textsf{Reduce} runs
in \(O(n^{2})\) time where $n = |V(G)|$. Let
\(V(G)=\{v_{1},\dotsc,v_{n}\}\). We assume that graph \(G\) is
given as its \(n\times{}n\) adjacency matrix \(M_{G}\) where
\(M_{G}[i][j]=1\) if \(v_{i}v_{j}\) is an arc in \(G\) and
\(M_{G}[i][j]=0\) otherwise. We assume also that the weight
function \(w\) is given as a \(1\times{}n\) array where \(w[i]\)
stores the weight of vertex \(v_{i}\).

We compute the two neighborhoods \(N^{-}(p)\) and \(N^{+}(p)\) of
the pivot vertex \(p\) by scanning the entries of the row
\(M_{G}[p]\); vertex \(v_{i}\in{}N^{+}(p)\) if \(M_{G}[p][i]=1\),
and \(v_{i}\in{}N^{-}(p)\) if \(v_{i}\neq{}p\) and
\(M_{G}[p][i]=0\). This takes \(O(n)\) time. Let
\(d_{in}=|N^{-}(p)|,d_{out}=|N^{+}(p)|\) be the in- and
out-degrees of vertex \(p\). We construct a
\(d_{out}\times{}d_{in}\) array \AAA to store the neighborhood
relation between the sets \(N^{+}(p)\) and \(N^{-}(p)\), and a
\(1\times{}d_{out}\) array \(OD\) to store the out-degrees of
vertices in \(N^{+}(p)\) \emph{into the set} \(N^{-}(p)\). We
initialize all entries of \AAA and \(OD\) to zeroes. Now for each
pair of vertices \(v_{i}\in{}N^{+}(p),v_{j}\in{}N^{-}(p)\) we
increment the entries \(\AAA[i][j]\) and \(OD[i]\) by \(1\) each
if and only if \(M_{G}[i][j]=1\). Once this is done the cell
\(OD[i]\) holds the number of out-neighbors of vertex
\(v_{i}\in{}N^{+}(p)\) in the set \(N^{-}(p)\), and
\(\AAA[i][j]=1\) if and only if \(v_{i}v_{j}\) is an arc in \(G\)
for vertices \(v_{i}\in{}N^{+}(p),v_{j}\in{}N^{-}(p)\). Since
\(|N^{+}(p)|+|N^{-}(p)|=(n-1)\) all this can be done in
\(O(n^{2})\) time.

To execute the test on line~\ref{alg:reduce_while_start} of
\autoref{alg:reduce} we scan the list \(OD\) for a non-zero
entry. If all entries of \(OD\) are zeros then there is no arc
\(xy\) of the specified form and the test returns
\textbf{False}. If \(OD[i]>0\) for some \(i\) then we scan the row
\(\AAA[i]\) to find an index \(j\) such that
\(\AAA[i][j]=1\). Then \(x=v_{i},y=v_{j}\) is a pair of vertices
which satisfy the test. We use these vertices to execute
lines~\ref{alg:reduce_if_start} to~\ref{alg:reduce_w_prime_update}
of the procedure.  We effect the addition of vertex \(v\) to the
set \(D_{i+1}\) on line~\ref{alg:reduce_D_update} as follows: If
\(v=x=v_{i}\in{}N^{+}(p)\) then we set \(OD[i]=0\) and
\(\AAA[i][j]=0\;;\;1\leq{}j\leq{}d_{in}\). If
\(v=y=v_{j}\in{}N^{-}(p)\) then for each \(1\leq{}i\leq{}d_{out}\)
such that \(\AAA[i][j]=1\), we decrement the cells \(OD[i]\) and
\(\AAA[i][j]\) by \(1\).

Each line of \autoref{alg:reduce}, except for
line~\ref{alg:reduce_D_update}, takes constant
time. Line~\ref{alg:reduce_D_update}---as described above---takes
\(O(n)\) time. Each execution of line~\ref{alg:reduce_D_update}
takes either a row or a column of \AAA which has non-zero entries
and sets all these entries to zero. Since the algorithm does not
increment these entries in the loop, we get that the
\textbf{while} loop of lines~\ref{alg:reduce_while_start}
to~\ref{alg:reduce_while_end} is executed at most
\(|N^{+}(p)|+|N^{-}(p)|=(n-1)\) times. Thus the entire procedure
runs in \(O(n^{2})\) time.  
\end{proof}

Combining \autoref{lem:hereditarySolutions},
\autoref{lem:InOutNbrhoodSplit}, and \autoref{lem:reduce} we
get

\begin{corollary}\label{cor:reduceMain}
  On input $(G, w, p)$ the procedure {\sf Reduce} runs in $O(n^2)$
  time and outputs a vertex set $D \subseteq V(G) \setminus \{p \}$ and a  weight function $\tilde{w}$ such that for every FVS $S^-$ of $G[N^{-}(p) \setminus D]$ and every FVS $S^+$ of $G[N^{+}(p) \setminus D]$, we have that $S^- \cup S^+ \cup D$ is a  $p$-disjoint FVS of $G$. 
  
  Further, if $S^-$ is a $2$-approximate
  solution of $(G[N^{-}(p) \setminus D], \tilde{w})$ and $S^+$ is 
  $2$-approximate solution of $(G[N^{+}(p) \setminus D], \tilde{w})$ then $S^- \cup S^+ \cup D$ is a $2$-approximate $p$-disjoint solution of $(G, w)$.
\end{corollary}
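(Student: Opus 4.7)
The strategy is to stitch together the three preceding lemmas. The running time bound and the property $D \subseteq V(G) \setminus \{p\}$ come directly from \autoref{lem:reduce}. The same lemma guarantees that the reduced digraph $G-D$ has no arcs from $N^{+}(p)$ to $N^{-}(p)$; since $p \notin D$, part $(i)$ of \autoref{lem:InOutNbrhoodSplit} applied to the tournament $G-D$ tells us that $p$ lies in no triangle of $G-D$. This unlocks the additive decomposition of part $(ii)$ of that lemma across $H_{in} := G[N^{-}(p) \setminus D]$ and $H_{out} := G[N^{+}(p) \setminus D]$, which are precisely the in- and out-neighborhoods of $p$ in $G-D$.

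For the feasibility claim, given FVSes $S^-$ of $H_{in}$ and $S^+$ of $H_{out}$, I would first observe that $S^-$ and $S^+$ are disjoint (the in- and out-neighborhoods of $p$ are disjoint in any tournament) and both avoid $p$. The converse direction of \autoref{lem:InOutNbrhoodSplit}$(ii)$ applied to $G-D$ then yields that $S^- \cup S^+$ is an FVS of $G-D$, whence $S^- \cup S^+ \cup D$ is an FVS of $G$ that misses $p$, i.e.\ a $p$-disjoint FVS of $G$.

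For the $2$-approximation claim, I would reduce to \autoref{lem:reduce}$(ii)$ by proving that $S^- \cup S^+$ is a $2$-approximate $p$-disjoint solution of $(G-D, \tilde{w})$. Let $R^{\star}$ be an optimal $p$-disjoint FVS of $(G-D, \tilde{w})$. Since $V(G-D) \setminus \{p\} = V(H_{in}) \sqcup V(H_{out})$, it partitions as $R^{\star} = R^{\star}_{in} \sqcup R^{\star}_{out}$, and the forward direction of \autoref{lem:InOutNbrhoodSplit}$(ii)$ forces the two parts to be FVSes of $H_{in}$ and $H_{out}$ respectively (an induced-subgraph fact that is also an immediate instance of \autoref{lem:hereditarySolutions}). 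Conversely, combining optimal FVSes of $H_{in}$ and $H_{out}$ yields a $p$-disjoint FVS of $G-D$ of total weight $\mathrm{opt}(H_{in}) + \mathrm{opt}(H_{out})$, which is at least $\tilde{w}(R^{\star})$ by optimality; together with the trivial reverse inequality this pins $\tilde{w}(R^{\star}_{in}) = \mathrm{opt}(H_{in})$ and $\tilde{w}(R^{\star}_{out}) = \mathrm{opt}(H_{out})$. The hypotheses $\tilde{w}(S^-) \leq 2\,\mathrm{opt}(H_{in})$ and $\tilde{w}(S^+) \leq 2\,\mathrm{opt}(H_{out})$ then give $\tilde{w}(S^- \cup S^+) \leq 2\tilde{w}(R^{\star})$, and \autoref{lem:reduce}$(ii)$ lifts this bound from $(G-D, \tilde{w})$ back up to $(G, w)$.

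The only non-routine step, and the main obstacle, is the additive decoupling of the $p$-disjoint FVS problem on $(G-D, \tilde{w})$ into two independent subproblems on $H_{in}$ and $H_{out}$. This rests entirely on the guarantee from \textsf{Reduce} that no $N^{+}(p) \to N^{-}(p)$ arc survives in $G-D$, which is exactly the hypothesis needed to invoke \autoref{lem:InOutNbrhoodSplit}; everything else is bookkeeping that follows the structure of \autoref{lem:reduce}.
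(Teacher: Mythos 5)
Your proposal is correct and takes essentially the same route as the paper's proof: both use \autoref{lem:reduce} to get the no-arc guarantee, invoke \autoref{lem:InOutNbrhoodSplit} to decompose the $p$-disjoint problem on $(G-D,\tilde{w})$ into $G[N^{-}(p)\setminus D]$ and $G[N^{+}(p)\setminus D]$, show that an optimal $p$-disjoint solution of $(G-D,\tilde{w})$ has weight equal to the sum of the two subinstance optima (the paper via a short contradiction, you via the two matching inequalities), and then lift the resulting $2$-approximation back to $(G,w)$ using part $(ii)$ of \autoref{lem:reduce}.
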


\begin{proof}
The running time of procedure {\sf Reduce} follows from \autoref{lem:reduce}.  Let $S^-$ be an FVS of $G[N^{-}(p) \setminus D]$ and  $S^+$ be an FVS of $G[N^{+}(p) \setminus D]$. By \autoref{lem:reduce}, there are no arcs from $N^+(p)$ to $N^-(p)$ in digraph \(G-D\). Then  by statement $(i)$ of \autoref{lem:InOutNbrhoodSplit}, 
$p$ is not part of any triangle in $G-D$. Thus, by statement $(ii)$ of \autoref{lem:InOutNbrhoodSplit}, $S^-\cup S^+$ is an FVS of $G-D$. Therefore, 
by \autoref{lem:hereditarySolutions}, $S^- \cup S^+ \cup D$ is an FVS of $G$. Moreover, since $p\notin S^- \cup S^+ \cup D$, it is a $p$-disjoint FVS of $G$.  

Suppose $S^-$ is a $2$-approximate solution of $(G[N^{-}(p) \setminus D], \tilde{w})$ and $S^+$ is a
  $2$-approximate solution of $(G[N^{+}(p) \setminus D], \tilde{w})$. Now we claim that $S^-\cup S^+$ is a $2$-approximate $p$-disjoint solution of $(G-D,\tilde{w})$. 
Let \(R^{-}\) and \(R^{+}\) be optimal solutions  of $(G[N^{-}(p) \setminus D], \tilde{w})$ and  $(G[N^{+}(p) \setminus D], \tilde{w})$, respectively. Then we claim that \(R^{-}\cup R^{+}\) is an optimal $p$-disjoint solution of $(G-D, \tilde{w})$. By statement $(ii)$ of \autoref{lem:InOutNbrhoodSplit}, $R^-\cup R^+$ is an FVS of $G-D$ and clearly it does not contain $p$. Suppose $R^-\cup R^+$ is not an optimal $p$-disjoint solution of $(G-D, \tilde{w})$. Let $R^{\star}$ be an optimal $p$-disjoint solution 
of $(G-D, \tilde{w})$ and $\tilde{w}(R^{\star})<\tilde{w}(R^{-}\cup R^{+})$. Then, either $\tilde{w}(R^{\star}\cap (N^{-}(p) \setminus D))<\tilde{w}(R^{-})$ or $\tilde{w}(R^{\star}\cap (N^{+}(p) \setminus D))<\tilde{w}(R^{+})$. Consider the case when $\tilde{w}(R^{\star}\cap (N^{-}(p) \setminus D))<\tilde{w}(R^{-})$. By \autoref{lem:InOutNbrhoodSplit}, $R^{\star}\cap (N^{-}(p) \setminus D)$ is an FVS of $G[N^{+}(p) \setminus D]$. But this contradicts the assumption that \(R^{-}\) is an optimal solution of $(G[N^{-}(p) \setminus D], \tilde{w})$. The same arguments apply to the case when 
$\tilde{w}(R^{\star}\cap (N^{+}(p) \setminus D))<\tilde{w}(R^{+})$. Therefore \(R^{-}\cup R^{+}\) is an optimal $p$-disjoint solution of $(G-D, \tilde{w})$. Since $S^-$ is a $2$-approximate solution of $(G[N^{-}(p) \setminus D], \tilde{w})$ and $S^+$ is a $2$-approximate solution of $(G[N^{+}(p) \setminus D], \tilde{w})$, we have that $\tilde{w}(S^-\cup S^+)=\tilde{w}(S^-)+ \tilde{w}(S^+)\leq 2(\tilde{w}(R^-)+\tilde{w}(R^+))\leq 2\tilde{w}(R^{-}\cup R^{+})$. Hence, $S^-\cup S^+$ is a $2$-approximate $p$-disjoint solution of $(G-D,\tilde{w})$. 
Then by \autoref{lem:reduce}, $S^- \cup S^+ \cup D$ is a $2$-approximate $p$-disjoint solution of $(G, w)$. This completes the proof of the corollary. 
\end{proof}

We are now ready to prove our main theorem. 
\mainthm*

\begin{proof}
We first describe the algorithm. On input $(G, w)$, if $G$ has at most $10$ vertices the algorithm finds an optimal solution by exhaustively enumerating and comparing all potential solutions. Otherwise the algorithm iteratively computes at most \(26\)
solutions of $(G,w)$ by making recursive calls. It then outputs the least weight FVS among them. We now describe the iterations and the recursive calls. Let us index the iteration by $i \in \{0,1, \ldots,25\}$.

The first iteration is different from the other \(25\) iterations. In this iteration, the algorithm sets $D \subseteq V(G)$ to be the set of the $\frac{n}{6}$ vertices of smallest weight in $V(G)$ and  $\Delta = \max_{v \in D} w(v)$. 
Let \(w'\colon V(G)\setminus D \rightarrow {\mathbb N}\) be the weight function 
  which assigns the weight \(w(v)-\Delta\) to each vertex \(v\) of
  \(G-D\). 
The algorithm calls itself recursively on $(G - D, w')$. The recursive call returns an FVS $S$ of $G - D$, the algorithm constructs the FVS $S_0 = S \cup D$ of $G$.

We do the remaining 25 iterations
only when the set $\{v : N^+(v) \leq 8n/9, N^-(v) \leq 8n/9\}$ is non-empty. 
For each of these 25 iterations (which we index by $i \in \{1,2, \ldots,25\}$), the algorithm picks a vertex $p_i$ uniformly at random from the set of vertices $\{v : N^+(v) \leq 8n/9, N^-(v) \leq 8n/9\}$. 
For each $p_i$ the algorithm runs the procedure {\sf Reduce} on $G$, $p_i$, and $w$ and obtains a set $D_i$ and a weight function $\tilde{w}_i$. It then makes two recursive calls, one on $(G[N^-(p_i) \setminus D_i], \tilde{w}_i)$, and the other on $(G[N^+(p_i) \setminus D_i], \tilde{w}_i)$. Let the sets returned by the two recursive calls be $S^-_i$ and $S^+_i$ respectively. The algorithm constructs the set $S_i = S^-_i \cup S^+_i \cup D_i$ as the FVS of G corresponding to $i$. 

Finally, the algorithm outputs the minimum weight $S_i$, where the minimum is taken over $0 \leq i \leq 25$ as the solution. The algorithm terminates within the claimed  running time, since the running time is governed by the recurrence $T(n) \leq 51 \cdot T(8n/9) + O(n^2)$ which solves to $T(n) = O(n^{34})$ by the Master theorem~\cite{Cormen:2009:IAT:1614191}. 
We now prove that in each iteration, the constructed solution $S_i$ is indeed an FVS of $G$, and that the same holds for the solution returned by the algorithm. We apply an induction on the number of vertices in $G$. For $n \leq 10$ there are no recursive calls made, and the returned solution is an optimal solution, since it is computed by brute force. For $n > 10$ the returned solution is one of the $S_i$'s and so it is sufficient to prove that all $S_i$'s are in fact feedback vertex sets of $G$.
For $S_i$, $i \geq 1$ this follows from Corollary~\ref{cor:reduceMain} and the induction hypothesis. 
And for $i=0$, we know that $S_0=S\cup D$ and $S$ is a vertex subset returned by the recursive call for the instance $(G-D,w')$, which is also an FVS of $G-D$, by the induction hypothesis.  Since $G-S_0=((G-D)-S)$ and $S$ is an FVS of $(G-D)$, clearly $S_0$ is an FVS of $G$.  

Finally, will show that with probability at least $1/2$, the algorithm outputs a $2$-approximate solution of $(G,w)$. We prove this by induction on $n$, the number of vertices in $G$. Suppose that $S_i$ is of the least weight among $S_0,S_1, \ldots, S_{25}$, for some $i \in \{0,2,\ldots 25\}$, which is output by the algorithm. For $n \leq 10$ the returned solution is optimal, so assume $n > 10$. Let $S_{OPT}$ be an optimal solution for $(G, w)$. We distinguish between two cases, either $|S_{OPT}| \geq 2n/3$ or $|S_{OPT}| < 2n/3$.
If $|S_{OPT}| \geq 2n/3$ then, by the induction hypothesis the first iteration, the recursive call on $(G - D, w')$ returns a $2$-approximate solution $S$ for $(G - D, w')$ with probability at least $1/2$. In this case it follows from \autoref{lem:largeSoln} that $S_i$ for $i=0$, is a $2$-approximate solution for $(G, w)$.

Suppose now that $|S_{OPT}| < 2n/3$. We will argue that in each of the $25$ remaining iterations the probability that $p_i \notin S_{OPT}$ is at least $1/9$. Indeed, $G - S_{OPT}$ is an acyclic tournament on at least $n/3$ vertices. Let $R$ be the set of vertices in $V(G) \setminus S_{OPT}$ excluding the first $\lfloor n/9 \rfloor$ vertices and the last $\lfloor n/9 \rfloor$ vertices in the unique topological order of the acyclic tournament $G - S_{OPT}$. For each vertex $v$ in $R$ it holds that $|N^+(v)| \leq n - \lfloor n/9 \rfloor - 1 \leq 8n/9$ and similarly $|N^-(v)| \leq 8n/9$, i.e. $R \subseteq \{v : N^+(v) \leq 8n/9, N^-(v) \leq 8n/9\}$.
Furthermore, $|R| \geq n/9$ since $|V(G) \setminus S_{OPT}| \geq n/3$. Hence, when we pick a random vertex $p_i$ among all vertices with in-degree and out-degree at most $8n/9$ we have that with probability at least $1/9$ the vertex $p_i$ is in $R$, and therefore not in $S_{OPT}$.

We shall say that an iteration $i$ with $i \geq 1$ is {\em good} if $p_i \notin S_{OPT}$ and the two solutions $S^-_i$ and $S^+_i$ returned from the recursive calls on $(G[N^-(p_i) \setminus D_i], \tilde{w}_i)$ and $(G[N^+(p_i) \setminus D_i], \tilde{w}_i)$, respectively are $2$-approximate for their respective instances. Since $p_i \notin S_{OPT}$ with probability at least $1/9$, and each of $S^-_i$ and $S^+_i$ are $2$-approximate with probability at least $1/2$ (by the induction hypothesis), it follows that this iteration is good with probability at least $1/9 \cdot 1/2 \cdot 1/2 \geq 1/36$. 
Therefore, with probability at least
$$1 - (1 - 1/36)^{25} \geq 1/2$$
there is at least one iteration $i$ which is good. 
For this iteration it follows from Corollary~\ref{cor:reduceMain} that $S_i = D_i \cup S^+_i \cup S^-_i$ is $2$-approximate $p_i$-disjoint solution of $(G,w)$. Moreover, since $p_i \notin S_{OPT}$, $S_{OPT}$ is also an optimal $p_i$-disjoint solution of $(G,w)$.   
Hence $w(S_i)\leq 2w(S_{OPT})$. 
Therefore the solution output by the algorithm is a $2$-approximate solution with probability at least $1/2$.
This concludes the proof.
\end{proof}

\subsection{Deterministic $2$-approximation in quasi-polynomial time.}
We can easily derandomize the above algorithm in quasi-polynomial
time.  Instead of randomly selecting the pivots $p_i$, we iterate
over all the candidates in
$\{v : N^+(v) \leq 8n/9, N^-(v) \leq 8n/9\}$.  The correctness of
this algorithm follows from the same arguments as above, and we
obtain a deterministic $2$-approximation algorithm for \tfvs.  To
bound the running time, observe that the number of recursive calls
will be at most $2n+1$. Thus the running time of the algorithm
will be governed by the recurrence
$T(n) \leq (2n+1) \cdot T(8n/9) + O(n^2)$ which solves to
$T(n) = n^{O(\log n)}$ by the Master
theorem~\cite{Cormen:2009:IAT:1614191}.  Thus we get the following
theorem.

\begin{restatable}{thm}{quasithm}
\label{thm:quasi}
There exists an algorithm that given an instance $(G,w)$ of \tfvs
on \(n\) vertices, runs in time $n^{O(\log n)}$ and outputs a
$2$-approximate solution of $(G,w)$.
\end{restatable}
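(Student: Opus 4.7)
The plan is to derandomize the algorithm from Theorem \ref{thm:main} by replacing the random pivot selection with exhaustive enumeration, and to re-run the inductive correctness argument with a deterministic ``good pivot exists'' guarantee in place of the probabilistic one. First I would describe the algorithm: on input $(G,w)$ with $n \leq 10$, solve by brute force. Otherwise compute the Phase~1 candidate $S_0 = S \cup D$ exactly as in Theorem \ref{thm:main}, where $D$ is the set of the $n/6$ lightest vertices, $w'$ is the shifted weight function from Lemma \ref{lem:largeSoln}, and $S$ is obtained by recursing on $(G - D, w')$. In place of the $25$ random pivots, iterate over \emph{every} vertex $p$ in the set $P = \{v : |N^+(v)| \leq 8n/9,\; |N^-(v)| \leq 8n/9\}$; for each such $p$, invoke \textsf{Reduce}$(G, w, p)$ to obtain $(D_p, \tilde{w}_p)$, recurse on the two smaller instances $(G[N^-(p) \setminus D_p], \tilde{w}_p)$ and $(G[N^+(p) \setminus D_p], \tilde{w}_p)$ to produce $S^-_p$ and $S^+_p$, and form the candidate $S_p = D_p \cup S^-_p \cup S^+_p$. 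The algorithm outputs the lightest of $S_0$ and all the $S_p$'s.

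For correctness, I would argue by induction on $n$, mirroring the proof of Theorem \ref{thm:main} but without the probabilistic amplification. Feasibility of each $S_p$ and of $S_0$ follows, as before, from Corollary \ref{cor:reduceMain}, Lemma \ref{lem:hereditarySolutions}, and the induction hypothesis. Let $S_{OPT}$ be optimal. If $|S_{OPT}| \geq 2n/3$, then Lemma \ref{lem:largeSoln} applied to $S_0$ (using that the recursive call on $(G-D, w')$ returns a $2$-approximate solution by the induction hypothesis) gives $w(S_0) \leq 2 w(S_{OPT})$. Otherwise $|S_{OPT}| < 2n/3$; reusing the topological-order argument from Theorem \ref{thm:main}, the middle portion $R$ of the unique topological sort of the acyclic tournament $G - S_{OPT}$ satisfies $R \subseteq P \setminus S_{OPT}$ and $R \neq \emptyset$. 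Since the derandomized algorithm tries \emph{every} $p \in P$, it in particular tries some $p^\star \in R$. For this pivot, the induction hypothesis gives $2$-approximate solutions on both sub-instances, so Corollary \ref{cor:reduceMain} yields that $S_{p^\star}$ is a $2$-approximate $p^\star$-disjoint solution of $(G, w)$, and since $p^\star \notin S_{OPT}$ the set $S_{OPT}$ is itself $p^\star$-disjoint. Hence $w(S_{p^\star}) \leq 2 w(S_{OPT})$, and the returned minimum inherits this bound.

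The running time is governed by the recurrence $T(n) \leq (2n+1) \cdot T(8n/9) + O(n^2)$: the $+1$ accounts for the Phase~1 recursive call on an instance of size $5n/6 \leq 8n/9$, each pivot in $P$ contributes two recursive calls on instances of size at most $8n/9$ (since $p \in P$ has both $|N^+(p)|, |N^-(p)| \leq 8n/9$), and the $O(n^2)$ covers all invocations of \textsf{Reduce} and the bookkeeping. Unrolling this recurrence for $\log_{9/8} n = \Theta(\log n)$ levels (or invoking the Master theorem as in \cite{Cormen:2009:IAT:1614191}) gives $T(n) = n^{O(\log n)}$.

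The main conceptual point, rather than a genuine obstacle, is verifying that a single deterministic sweep through $P$ suffices where the randomized algorithm needed $25$ independent trials: this is precisely because the existence of $p^\star \in P \setminus S_{OPT}$ is a guaranteed consequence of $|S_{OPT}| < 2n/3$ together with the topological-middle argument, not merely an event of probability $1/9$. The only technical care needed is to ensure that Phase~1 is still executed in every recursive call so that the case $|S_{OPT}| \geq 2n/3$ is covered; once that is in place, the correctness proof reduces to a direct adaptation of Theorem \ref{thm:main} and the running-time bound follows from a standard recurrence analysis.
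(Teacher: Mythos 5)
Your proposal is correct and follows essentially the same route as the paper: replace the random pivots by an exhaustive sweep over all vertices with in- and out-degree at most $8n/9$, keep Phase~1 unchanged, reuse the topological-middle argument to guarantee a good pivot deterministically when $|S_{OPT}| < 2n/3$, and bound the time via $T(n) \leq (2n+1)\,T(8n/9) + O(n^2) = n^{O(\log n)}$. The paper's own treatment is just a terser version of exactly this argument, so nothing further is needed.
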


%
 
\section{Conclusions}\label{conclusion}
We presented a simple randomized $2$-approximation algorithm for {\sc Feedback Vertex Set in Tournaments}.
Assuming the Unique Games conjecture, the approximation ratio is optimal. However there is still some room for improvement. First and foremost, is it possible to obtain a deterministic $2$-approximation algorithm? Further, for the sake of clarity of presentation we did not attempt at all to optimize the running time of the algorithm. The exponent $34$ can be brought down substantially by implementing the following. 
\begin{enumerate}\setlength\itemsep{-.7mm}
\item Changing the threshold $2n/3$ for when $|S_{OPT}|$ is considered big (and the first recursive call returns an optimal solution) to $\alpha n$. In this case the set $D$ must be chosen to be the set of $(\alpha - 1/2)n$ vertices of smallest weight. 

\item Changing the success probability with which the algorithm
  returns a solution from $1/2$ to some constant $r$. This allows
  to reduce the number of iterations.

\item Changing the maximum indegree and outdegree of the sampled vertices $p_i$ from $8n/9$ to $\beta n$. This gives a trade-off between the probability that each iteration is good, and the upper bound on the size of the digraphs $G[N^-(p_i) \setminus D_i]$ and $G[N^+(p_i) \setminus D_i]$ in the recursive calls.

\item Instead of computing the probability that the pivot $p_i$ is in $R$, computing the probability that $p_i$ is not in $S_{OPT}$. In particular vertices in $V(G) \setminus (S_{OPT} \cup  R)$ either have both indegree and outdegree at most $\lfloor 8n/9 \rfloor$, in which case they contribute equally to the numerator and the denominator of the probability, or they do not, in which case they contribute to neither the numerator nor the denominator. The worst probability is achieved in the latter case, making the probability that $p_i \notin S_{OPT}$ be at least $1/7$ (instead of the lower bound of $1/9$ of being in $R$).

\item\label{pt:assymmetry} Not using the same upper bound on the number of vertices in all recursive calls. The first recursive call is made on an instance with (potentially) fewer vertices. More importantly, in each of the remaining iterations the algorithm makes two recursive calls, one with $\gamma_i n$ vertices and the other with $(1 - \gamma_i) n$ vertices. In our analysis we just used that $\gamma_i \leq 8/9$ and $(1 - \gamma_i) \leq 8/9$ without also using that in the worst case when $\gamma_i = 8/9$ we have $1 - \gamma_i = 1/9$.

\item Taking point~\ref{pt:assymmetry} one step further, after the algorithm has sampled $p_i$ it can observe what $\gamma_i$ is. It may then make several recursive calls on $G[N^-(p_i) \setminus D_i]$ and on $G[N^+(p_i) \setminus D_i]$, this gives another tradeoff between the time spent and the success probability that a particular iteration is good. Note that the number of recursive calls on  $G[N^-(p_i) \setminus D_i]$ and on $G[N^+(p_i) \setminus D_i]$ need not be the same - indeed it pays off to make more recursive call to the smaller instance, since that provides the best trade-off between running time and success probability. In particular the number of calls on  $G[N^-(p_i) \setminus D_i]$ and on $G[N^+(p_i) \setminus D_i]$ should be chosen as a function of $\gamma_i$.
\end{enumerate}
Nevertheless this is still a far cry from a practical running time, and it would be interesting to see whether one can achieve the same approximation ratio can be obtiained by an algorithm with a running time of $O(n^2)$ (i.e. linear in input size) or something close.

Finally it would be interesting to see whether ideas from this
algorithm can be used to improve approximation algorithms for
other  ``structured hitting-set'' problems. Here the {\sc Cluster
  Vertex Deletion} problem is a possible candidate.

\bibliography{fvst}{}

\begin{thebibliography}{10}

\bibitem{BafnaBF99}
Vineet Bafna, Piotr Berman, and Toshihiro Fujito.
\newblock A 2-approximation algorithm for the undirected feedback vertex set
  problem.
\newblock {\em {SIAM} J. Discrete Math.}, 12(3):289--297, 1999.

\bibitem{gutinDigraphsBook}
J\o{}rgen Bang-Jensen and Gregory~Z. Gutin.
\newblock {\em Digraphs: Theory, Algorithms and Applications}.
\newblock Springer Publishing Company, Incorporated, 2nd edition, 2008.

\bibitem{BARYEHUDA}
R~Bar-Yehuda and S~Even.
\newblock A linear-time approximation algorithm for the weighted vertex cover
  problem.
\newblock {\em Journal of Algorithms}, 2(2):198 -- 203, 1981.

\bibitem{bar2005}
Reuven Bar-Yehuda and Dror Rawitz.
\newblock On the equivalence between the primal-dual schema and the local ratio
  technique.
\newblock {\em SIAM Journal on Discrete Mathematics}, 19(3):762--797, 2005.

\bibitem{CaiDZ00}
Mao{-}cheng Cai, Xiaotie Deng, and Wenan Zang.
\newblock An approximation algorithm for feedback vertex sets in tournaments.
\newblock {\em {SIAM} J. Comput.}, 30(6):1993--2007, 2000.

\bibitem{ChenLLOR08}
Jianer Chen, Yang Liu, Songjian Lu, Barry O'Sullivan, and Igor Razgon.
\newblock A fixed-parameter algorithm for the directed feedback vertex set
  problem.
\newblock {\em J. {ACM}}, 55(5), 2008.

\bibitem{Cormen:2009:IAT:1614191}
Thomas~H. Cormen, Charles~E. Leiserson, Ronald~L. Rivest, and Clifford Stein.
\newblock {\em Introduction to Algorithms, Third Edition}.
\newblock The MIT Press, 3rd edition, 2009.

\bibitem{CyganNPPRW11}
Marek Cygan, Jesper Nederlof, Marcin Pilipczuk, Michal Pilipczuk, Johan M.~M.
  van Rooij, and Jakub~Onufry Wojtaszczyk.
\newblock Solving connectivity problems parameterized by treewidth in single
  exponential time.
\newblock In {\em {IEEE} 52nd Annual Symposium on Foundations of Computer
  Science, {FOCS} 2011, Palm Springs, CA, USA, October 22-25, 2011}, pages
  150--159, 2011.

\bibitem{dinur2005new}
Irit Dinur, Venkatesan Guruswami, Subhash Khot, and Oded Regev.
\newblock A new multilayered pcp and the hardness of hypergraph vertex cover.
\newblock {\em SIAM Journal on Computing}, 34(5):1129--1146, 2005.

\bibitem{Dom201076}
Michael Dom, Jiong Guo, Falk H{\"{u}}ffner, Rolf Niedermeier, and Anke
  Tru{\ss}.
\newblock Fixed-parameter tractability results for feedback set problems in
  tournaments.
\newblock {\em J. Discrete Algorithms}, 8(1):76--86, 2010.

\bibitem{erdHos1965independent}
P~Erd{\H{o}}s and L~P{\'o}sa.
\newblock On independent circuits contained in a graph.
\newblock {\em Canad. J. Math}, 17:347--352, 1965.

\bibitem{EvenNSS98}
Guy Even, Joseph Naor, Baruch Schieber, and Madhu Sudan.
\newblock Approximating minimum feedback sets and multicuts in directed graphs.
\newblock {\em Algorithmica}, 20(2):151--174, 1998.

\bibitem{even2000divide}
Guy Even, Joseph~Seffi Naor, Satish Rao, and Baruch Schieber.
\newblock Divide-and-conquer approximation algorithms via spreading metrics.
\newblock {\em Journal of the ACM (JACM)}, 47(4):585--616, 2000.

\bibitem{GJ79}
Michael~R. Garey and David~S. Johnson.
\newblock {\em Computers and Intractability: A Guide to the Theory of
  NP-Completeness}.
\newblock Series of Books in the Mathematical Sciences. W. H. Freeman and Co.,
  1979.

\bibitem{GaspersM13}
Serge Gaspers and Matthias Mnich.
\newblock Feedback vertex sets in tournaments.
\newblock {\em Journal of Graph Theory}, 72(1):72--89, 2013.

\bibitem{guruswami2011beating}
Venkatesan Guruswami, Johan H{\aa}stad, Rajsekar Manokaran, Prasad Raghavendra,
  and Moses Charikar.
\newblock Beating the random ordering is hard: Every ordering csp is
  approximation resistant.
\newblock {\em SIAM Journal on Computing}, 40(3):878--914, 2011.

\bibitem{guruswami2016simple}
Venkatesan Guruswami and Euiwoong Lee.
\newblock Simple proof of hardness of feedback vertex set.
\newblock {\em Theory of Computing}, 12(1):1--11, 2016.

\bibitem{khot2008vertex}
Subhash Khot and Oded Regev.
\newblock Vertex cover might be hard to approximate to within 2- $\varepsilon$.
\newblock {\em Journal of Computer and System Sciences}, 74(3):335--349, 2008.

\bibitem{KociumakaP14}
Tomasz Kociumaka and Marcin Pilipczuk.
\newblock Faster deterministic feedback vertex set.
\newblock {\em Inf. Process. Lett.}, 114(10):556--560, 2014.

\bibitem{KumarL16}
Mithilesh Kumar and Daniel Lokshtanov.
\newblock Faster exact and parameterized algorithm for feedback vertex set in
  tournaments.
\newblock In {\em 33rd Symposium on Theoretical Aspects of Computer Science,
  {STACS} 2016, February 17-20, 2016, Orl{\'{e}}ans, France}, pages
  49:1--49:13, 2016.

\bibitem{mnich20167}
Matthias Mnich, Virginia~Vassilevska Williams, and L{\'a}szlo~A V{\'e}gh.
\newblock A 7/3-approximation for feedback vertex sets in tournaments.
\newblock In {\em 24th Annual European Symposium on Algorithms (ESA 2016)}.
  Schloss Dagstuhl, 2016.

\bibitem{RamanS06}
Venkatesh Raman and Saket Saurabh.
\newblock Parameterized algorithms for feedback set problems and their duals in
  tournaments.
\newblock {\em Theor. Comput. Sci.}, 351(3):446--458, 2006.

\bibitem{Razgon07}
Igor Razgon.
\newblock Computing minimum directed feedback vertex set in
  o(1.9977\({}^{\mbox{n}}\)).
\newblock In {\em Theoretical Computer Science, 10th Italian Conference,
  {ICTCS} 2007, Rome, Italy, October 3-5, 2007, Proceedings}, pages 70--81,
  2007.

\bibitem{reed1996packing}
Bruce Reed, Neil Robertson, Paul Seymour, and Robin Thomas.
\newblock Packing directed circuits.
\newblock {\em Combinatorica}, 16(4):535--554, 1996.

\bibitem{Seymour95}
Paul~D. Seymour.
\newblock Packing directed circuits fractionally.
\newblock {\em Combinatorica}, 15(2):281--288, 1995.

\bibitem{speck_feedback}
Ewald Speckenmeyer.
\newblock On feedback problems in digraphs.
\newblock In {\em International Workshop on Graph-Theoretic Concepts in
  Computer Science}, pages 218--231. Springer, 1989.

\bibitem{svensson2012hardness}
Ola Svensson.
\newblock Hardness of vertex deletion and project scheduling.
\newblock In {\em Approximation, Randomization, and Combinatorial Optimization.
  Algorithms and Techniques}, pages 301--312. Springer, 2012.

\bibitem{willshmoys_book}
David~P Williamson and David~B Shmoys.
\newblock {\em The design of approximation algorithms}.
\newblock Cambridge university press, 2011.

\bibitem{XiaoN15}
Mingyu Xiao and Hiroshi Nagamochi.
\newblock An improved exact algorithm for undirected feedback vertex set.
\newblock {\em J. Comb. Optim.}, 30(2):214--241, 2015.

\end{thebibliography}

\end{document}